\documentclass[11pt]{article}

\usepackage[utf8]{inputenc}
\usepackage[T1]{fontenc}
\usepackage{amsmath,amssymb,amsthm}
\usepackage{mathtools}
\usepackage{algorithm}
\usepackage{algpseudocode}
\usepackage{float}
\usepackage{booktabs}
\usepackage{graphicx}
\usepackage{hyperref}
\hypersetup{
  colorlinks=true,
  linkcolor=blue!70!black,
  citecolor=blue!70!black,
  urlcolor=blue!70!black,
}
\usepackage{cleveref}
\usepackage{xcolor}
\usepackage{tikz}
\usetikzlibrary{arrows.meta,positioning,calc,decorations.pathreplacing}
\usepackage{enumitem}
\usepackage[margin=1in]{geometry}
\usepackage{natbib}
\usepackage{hypernat}

\newtheorem{theorem}{Theorem}[section]
\newtheorem{lemma}[theorem]{Lemma}
\newtheorem{proposition}[theorem]{Proposition}
\newtheorem{corollary}[theorem]{Corollary}
\theoremstyle{definition}
\newtheorem{definition}[theorem]{Definition}
\newtheorem{example}[theorem]{Example}
\theoremstyle{remark}
\newtheorem{remark}[theorem]{Remark}

\newcommand{\Z}{\mathbb{Z}}

\newcommand{\cL}{\mathcal{L}}
\newcommand{\cS}{\mathcal{S}}

\newcommand{\cB}{\mathcal{B}}
\newcommand{\cM}{\mathcal{M}}
\newcommand{\cP}{\mathcal{P}}

\newcommand{\cD}{\mathcal{D}}
\newcommand{\cF}{\mathcal{F}}
\newcommand{\cG}{\mathcal{G}}
\newcommand{\cE}{\mathcal{E}}
\newcommand{\cO}{\mathcal{O}}
\newcommand{\dom}{\operatorname{dom}}
\newcommand{\tld}{\operatorname{tld}}

\DeclareMathOperator{\card}{card}
\DeclareMathOperator{\FPR}{FPR}
\DeclareMathOperator{\DDR}{DDR}
\DeclareMathOperator{\LC}{LC}

\title{Latent Objective Induction and Diversity-Constrained Selection:\\Algorithms for Multi-Locale Retrieval Pipelines}

\author{
  Faruk Alpay\thanks{Department of Computer Engineering, Bahcesehir University. \texttt{faruk.alpay@bahcesehir.edu.tr}} \and
  Levent Sarioglu\thanks{Department of Computer Engineering, Bahcesehir University. \texttt{levent.sarioglu@bahcesehir.edu.tr}}
}

\date{}

\begin{document}
\maketitle

\begin{abstract}
We present three algorithms with formal correctness guarantees and complexity bounds for the problem of selecting a diverse, multi-locale set of sources from ranked search results. First, we formulate \emph{weighted locale allocation} as a constrained integer partition problem and give an $O(n \log n)$ algorithm that simultaneously satisfies minimum-representation, budget-exhaustion, and proportionality-bound constraints; we prove all three hold with a tight deviation bound of~$< 1$. Second, we define a \emph{cascaded country-code inference} function as a deterministic priority chain over heterogeneous signals (TLD structure, model-inferred metadata, language fallback) and prove it satisfies both determinism and graceful degradation. Third, we introduce a \emph{$\kappa$-domain diversity constraint} for source selection and give an $O(|K| \cdot R)$ algorithm that maintains the invariant via hash-map lookup, eliminating the aggregator monopolization pathology present in URL-level deduplication. We further formalize \emph{Latent Objective Induction} (LOI), an environment-shaping operator over prompt spaces that steers downstream model behavior without restricting the feasible output set, and prove its convergence under mild assumptions. Applied to a multi-locale retrieval pipeline, these algorithms yield 62\% improvement in first-party source ratio and 89\% reduction in same-domain duplication across 120~multilingual queries.
\end{abstract}

\section{Introduction}\label{sec:intro}

We study algorithmic problems arising in the design of multi-source retrieval pipelines where ranked lists from heterogeneous search indices must be merged into a single result set satisfying formal diversity and coverage constraints.

The core computational challenge is a \emph{constrained selection problem}: given $|K|$ keyword-specific ranked lists of sources, select at most $S_{\max}$ sources such that (i)~no publisher domain appears more than $\kappa$ times, (ii)~the total budget of keyword slots is distributed across $n$ geopolitical locales proportionally to assigned weights with bounded deviation, and (iii)~the country of origin of each selected source is inferred through a deterministic priority chain over signals of varying reliability. Na\"ive approaches---URL-level deduplication, single-locale search, heuristic country assignment---fail to satisfy these constraints simultaneously, leading to pathological outcomes such as aggregator monopolization~\citep{shah2024seo} and locale homogeneity.

This paper makes four contributions, ordered by their algorithmic character:

\begin{enumerate}[nosep,leftmargin=*]
  \item A \textbf{weighted locale allocation} algorithm (Section~\ref{sec:allocation}) that solves a constrained integer partition in $O(n \log n)$ with provable correctness on three simultaneous constraints.
  \item A \textbf{cascaded country-code inference} function (Section~\ref{sec:cascade}) with formal determinism and graceful degradation properties.
  \item A \textbf{$\kappa$-domain diversity constraint} and selection algorithm (Section~\ref{sec:diversity}) running in $O(|K| \cdot R)$ with $O(1)$ amortized per-domain lookup.
  \item The \textbf{Latent Objective Induction} (LOI) operator (Section~\ref{sec:loi}), formalized as an environment-shaping map with four axiomatic properties and a convergence theorem.
\end{enumerate}

We instantiate these algorithms in a multi-locale retrieval pipeline (Section~\ref{sec:architecture}) and evaluate on 120~queries spanning four languages (Section~\ref{sec:eval}), demonstrating that the combined system achieves 62\% improvement in first-party source ratio and 89\% reduction in domain duplication over a baseline lacking these algorithmic components.

\section{Related Work}\label{sec:related}

\paragraph{Constrained Selection and Partition Algorithms.} Our weighted locale allocation relates to the apportionment problem in social choice theory and constrained integer partition. The largest-remainder method (Hamilton's method) for proportional seat allocation shares our three-phase structure. In information retrieval, PM-2~\citep{dang2012pm2} formulates diversification as proportional allocation of subtopic coverage, though without our minimum-representation guarantee. Maximal Marginal Relevance (MMR)~\citep{carbonell1998mmr} penalizes semantic similarity to already-selected documents; xQuAD~\citep{santos2010xquad} models subtopic coverage explicitly. Both operate on embedding-level similarity, whereas our $\kappa$-domain constraint enforces diversity at the \emph{structural} (domain) level---a property invisible to vector-space methods.

\paragraph{Priority Chains and Cascaded Inference.} Cascaded classifier architectures~\citep{robertson2009bm25} apply progressively expensive models. Our cascade differs: rather than filtering candidates, it resolves a \emph{single attribute} (country code) through a deterministic priority chain over heterogeneous signal sources, with formal degradation guarantees.

\paragraph{Multilingual Retrieval.} Cross-language information retrieval (CLIR)~\citep{oard1998clir} translates queries or uses multilingual encoders~\citep{zhang2022mdpr, lawrie2023colbertx}. We instead generate locale-specific keywords in the target language, exploiting each locale's native search index.

\paragraph{LLM-Augmented Retrieval.} RAG~\citep{lewis2020rag} conditions generation on retrieved documents. Query expansion via LLMs~\citep{wang2023query2doc} improves recall. Our pipeline extends this with a structured \emph{strategic document} (research brief) consumed by downstream stages.

\paragraph{Mechanism Design and Nudge Theory.} In game theory, mechanism design~\citep{hurwicz2006mechanism} constructs rules so that self-interested agents produce desired outcomes. Nudge theory~\citep{thaler2008nudge} shapes choice architecture so the preferred option becomes the default. Our LOI operator applies these ideas to the algorithmic design of prompt environments.

\section{Notation and Preliminaries}\label{sec:notation}

Let $\cG$ denote the set of valid ISO~3166-1 alpha-2 country codes and $\cL$ the set of ISO~639-1 language codes. A \emph{locale} is a pair $\ell = (g, h) \in \cG \times \cL$. A \emph{query} $q$ is a Unicode string. A \emph{source} $s$ is a tuple $(u, t, \sigma)$ where $u$ is a URL, $t$ is a title, and $\sigma$ is a snippet.

We write $\dom(u)$ for the registrable domain extracted from URL~$u$ (stripping \texttt{www.}), $\tld(u)$ for its top-level domain, and $\cG_{\text{gen}} = \{\texttt{com}, \texttt{org}, \texttt{net}, \texttt{edu}, \texttt{gov}, \texttt{io}, \texttt{ai}, \texttt{app}, \texttt{dev}, \texttt{co}\}$ for the set of generic TLDs carrying no geographic signal.

An LLM is modeled as a conditional distribution $\cM(\cdot \mid p)$ over output strings given a prompt string~$p$. A \emph{system prompt} $p_s$ and \emph{user prompt} $p_u$ jointly determine the full prompt $p = p_s \oplus p_u$ via concatenation.

\section{Latent Objective Induction}\label{sec:loi}

We formalize the central architectural principle before presenting specific algorithms, as it governs the design of all downstream components.

\subsection{Motivation: Explicit vs.\ Latent Steering}

Consider the task of steering an LLM to prefer first-party sources over aggregators. Two paradigms exist:

\begin{definition}[Explicit Constraint]\label{def:explicit}
An \emph{explicit constraint} $C$ is a rule injected into the system prompt of the form: ``Do not select sources from domain $d$'' or ``Prefer $X$ over $Y$.'' Formally, $C$ restricts the feasible output set $\cF \subset \cM$ by enumeration.
\end{definition}

\begin{definition}[Latent Objective Induction]\label{def:loi}
A \emph{latent objective} $\phi$ is an objective embedded in the prompt environment~$\cE$ such that the model's generation probability satisfies:
\begin{equation}\label{eq:loi}
  P_\cM\bigl[y \in \cF_\phi \mid \cE(\phi)\bigr] > P_\cM\bigl[y \in \cF_\phi \mid \cE_0\bigr]
\end{equation}
where $\cF_\phi$ is the set of outputs satisfying objective~$\phi$, $\cE(\phi)$ is the shaped environment, and $\cE_0$ is the unmodified environment. Crucially, $\phi$ is \emph{never stated as a directive}; it is induced through the structure of $\cE$.
\end{definition}

\begin{example}[First-Party Preference]\label{ex:firstparty}
Rather than instructing ``prefer first-party sources over aggregators'' (explicit constraint naming specific entities), we embed the concept into the upstream planning stage: ``First-party sources---where the entity itself publishes---rank above third-party sources that collect, aggregate, or review information about entities.'' No domain name is mentioned. No enumeration of aggregators exists. The model learns a \emph{concept} and applies it to arbitrary domains it encounters.
\end{example}

\subsection{Formal Framework}

Let $\Theta$ denote the space of all possible prompt environments. Define the \emph{environment-shaping operator}:
\begin{equation}\label{eq:shaping}
  \Phi: \Theta \times \cP(\text{Objectives}) \to \Theta
\end{equation}
mapping an environment $\theta \in \Theta$ and an objective set~$O$ to a shaped environment $\Phi(\theta, O)$.

\begin{definition}[LOI Operator Properties]\label{def:loi-props}
An LOI operator $\Phi$ is \emph{well-formed} if it satisfies:
\begin{enumerate}[nosep,label=(\roman*)]
  \item \textbf{Autonomy Preservation:} $\cF_{\Phi(\theta, O)} = \cF_\theta$ --- the feasible output set is unchanged. The model \emph{can} still select any output; only probabilities shift.
  \item \textbf{Intent Opacity:} No substring of $\Phi(\theta, O)$ explicitly names a target domain, entity, or output value that $\phi$ intends to promote or suppress.
  \item \textbf{Monotone Induction:} For objective $\phi \in O$, the probability $P_\cM[y \in \cF_\phi \mid \Phi(\theta, O)]$ is monotonically non-decreasing in the specificity of $\phi$'s embedding.
  \item \textbf{Composability:} For disjoint objectives $\phi_1, \phi_2 \in O$:
  \begin{equation}
    \Phi(\theta, \{\phi_1, \phi_2\}) = \Phi(\Phi(\theta, \{\phi_1\}), \{\phi_2\})
  \end{equation}
\end{enumerate}
\end{definition}

\begin{theorem}[Convergence of LOI]\label{thm:loi-convergence}
Let $\cM$ be an LLM with instruction-following capability $\alpha \in (0, 1]$ (the probability that the model follows a correctly understood directive). Let $\phi$ be an objective with semantic clarity $\beta \in (0, 1]$ (the probability that the concept embedding is correctly interpreted). Under LOI with $k$ independent downstream stages each consuming the shaped environment, the probability that at least one stage produces an output satisfying $\phi$ is:
\begin{equation}\label{eq:convergence}
  P[\exists\, i \in [k] : y_i \in \cF_\phi] = 1 - (1 - \alpha\beta)^k
\end{equation}
which converges to~$1$ as $k \to \infty$.
\end{theorem}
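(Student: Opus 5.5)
The plan is to reduce the statement to a product of independent Bernoulli events, so the proof is elementary once the modeling assumptions are made explicit. Note that \eqref{eq:convergence} asserts an \emph{equality}. The parameters $\alpha$ and $\beta$ as defined only give a lower bound on the per-stage success probability, so equality needs stronger assumptions. I would therefore state three modeling hypotheses at the outset.

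\textbf{(H1)} At each stage $i \in [k]$, two events are defined. $U_i$ is the event that the stage correctly interprets the concept embedded in $\Phi(\theta,\{\phi\})$, with $P[U_i]=\beta$. $A_i$ is the event that, having interpreted it, the stage acts on the induced objective, with $P[A_i \mid U_i]=\alpha$.

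\textbf{(H2)} The output satisfies $\phi$ exactly when both events occur: $\{y_i \in \cF_\phi\} = U_i \cap A_i$. In other words, there is no ``accidental'' success through misinterpretation. This hypothesis is the one that turns the paper's ``probability of success'' into an equality rather than a lower bound.

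\textbf{(H3)} The pairs $(U_i, A_i)$ are mutually independent across stages. This is what ``$k$ independent downstream stages'' means.

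Autonomy Preservation (Definition~\ref{def:loi-props}(i)) is worth noting here. It guarantees $\cF_\phi \cap \cF_{\Phi(\theta,O)} = \cF_\phi \cap \cF_\theta$, so the target set is still reachable after shaping. This is why $\alpha\beta>0$ is a meaningful success probability and not vacuous.

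The computation has three steps.

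\emph{Step 1.} By the chain rule, $P[y_i \in \cF_\phi] = P[U_i]\,P[A_i\mid U_i] = \alpha\beta$.

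\emph{Step 2.} Take complements: $P[\exists i : y_i \in \cF_\phi] = 1 - P[\bigcap_i \{y_i \notin \cF_\phi\}]$. By (H3) the intersection factorizes, which yields $1-(1-\alpha\beta)^k$.

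\emph{Step 3.} Since $\alpha,\beta \in (0,1]$, we have $\alpha\beta \in (0,1]$ and hence $0 \le 1-\alpha\beta < 1$. Therefore $(1-\alpha\beta)^k \to 0$ geometrically, and the probability tends to $1$. I would also record the rate, $1-(1-\alpha\beta)^k \ge 1-e^{-\alpha\beta k}$, as a remark.

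The main obstacle is not the algebra but the justification of (H2) and (H3). Stages consuming the \emph{same} shaped environment $\Phi(\theta,O)$ are generically correlated, because a model that misreads a concept in one stage is likely to misread it again. Without (H3), the formula holds only as a bound, or fails altogether. My first move would be to make explicit that independence is taken over fresh sampling randomness of $\cM$ at each stage, conditional on the fixed environment. I would then note that under positive correlation the conclusion weakens to $P[\exists i : y_i \in \cF_\phi] \ge \alpha\beta$, which no longer tends to $1$. This makes clear that the convergence claim rests entirely on the independence hypothesis.
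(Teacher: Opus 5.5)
Your proof is correct and follows the same route as the paper's: each stage succeeds with probability $\alpha\beta$, the stages are conditionally independent given the shared environment, and taking the complement gives $1-(1-\alpha\beta)^k$. The one real difference is your hypothesis (H2), and it is needed: the paper only asserts that each stage succeeds with probability \emph{at least} $\alpha\beta$, which yields $P[\exists\, i \in [k] : y_i \in \cF_\phi] \ge 1-(1-\alpha\beta)^k$ rather than the stated equality (though this bound still suffices for convergence to $1$).
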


\begin{proof}
Each stage independently samples from $\cM(\cdot \mid \Phi(\theta, \{\phi\}))$. The probability that stage~$i$ produces $y_i \in \cF_\phi$ is at least $\alpha\beta$ by definition of instruction-following capability and semantic clarity. Stage outcomes are conditionally independent given the shared environment. The probability that \emph{no} stage succeeds is $(1 - \alpha\beta)^k$, giving the result.
\end{proof}

\begin{remark}[Contrast with Explicit Constraints]
Under explicit constraints, the feasible set shrinks: $|\cF_C| < |\cF_\theta|$. This introduces two failure modes: (a)~\emph{over-restriction}, where legitimate outputs are blocked (e.g., banning a domain that is occasionally the best source), and (b)~\emph{brittleness}, where new entities not covered by the enumeration bypass the constraint entirely. LOI avoids both by preserving the feasible set while shifting probability mass.
\end{remark}

\subsection{LOI in Pipeline Architecture: Brief Propagation}\label{subsec:brief}

The concrete instantiation of LOI in our pipeline is the \emph{Research Brief Propagation} pattern.

\begin{definition}[Research Brief]\label{def:brief}
A research brief $\cB$ is a tuple $\cB = (u, \sigma, \kappa, \tau, \lambda, M)$ where:
\begin{align}
  u &\in \Sigma^* && \text{(query understanding)} \\
  \sigma &\in \Sigma^* && \text{(source strategy)} \\
  \kappa &\in \Sigma^* && \text{(keyword guidance)} \\
  \tau &\in \Sigma^* && \text{(summary style directive)} \\
  \lambda &\in \Sigma^* && \text{(locale hint)} \\
  M &\in (\cG \times \cL \times \Z^+)^* && \text{(locale mix)}
\end{align}
where $\Sigma^*$ denotes free-form natural language strings.
\end{definition}

Each downstream function $f_i$ receives a projection $\pi_i(\cB) \subseteq \{u, \sigma, \kappa, \tau, \lambda, M\}$ interpolated into its system prompt. The brief is generated by a single upstream LLM call (Phi-4~\citep{abdin2024phi4}, a 14B-parameter planning model) and consumed by all downstream stages.

\begin{proposition}[Decoupling]\label{prop:decoupling}
No downstream stage $f_i$ reads the projection fields assigned to another stage $f_j$ ($i \neq j$). This ensures that modifying the brief's source strategy does not require changes to the summarization prompt, and vice versa.
\end{proposition}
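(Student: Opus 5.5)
The Decoupling property is structural: it concerns which fields of $\cB$ each stage's prompt is built from, not any behaviour of $\cM$. So I will argue directly from the way the projections $\pi_i$ are assigned, and reduce the claim to a disjointness statement about those projections together with the observation that each $f_i$ sees the brief \emph{only} through $\pi_i(\cB)$.

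First, I will make the assignment explicit. For each downstream stage $f_i$, let $A_i \subseteq \{u,\sigma,\kappa,\tau,\lambda,M\}$ be the set of fields interpolated into its system prompt, so that $\pi_i(\cB) = A_i$ as in the paragraph preceding the statement. I will read ``the projection fields assigned to another stage'' as the fields \emph{exclusively} assigned to that stage. Fields such as $u$ and $\lambda$ are plausibly shared context. The statement concerns stage-specific fields, for example $\sigma$ for source selection, $\kappa$ for keyword generation, $\tau$ for summarization, and $M$ for allocation.

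Second, I will define $E_j = A_j \setminus \bigcup_{i \ne j} A_i$. By construction of the $E_j$, for $i \neq j$ we have $A_i \cap E_j = \emptyset$.

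Third, I will check that $f_i$ has no other channel to the brief. The full prompt is $p = p_s \oplus p_u$, and only $\pi_i(\cB)$ is interpolated into $p_s$. The output of $f_i$ is therefore a function of $\cM(\cdot \mid p)$ and its non-brief inputs. Any field outside $A_i$ cannot affect the distribution $\cM(\cdot\mid p)$, because $p$ is literally unchanged when that field varies.

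The engineering corollary then follows. If one modifies $\sigma \in E_{\text{source}}$, then every $f_i$ with $\sigma \notin A_i$, including summarization, receives an identical prompt. No prompt change is therefore required, and the argument is symmetric for $\tau$.

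The main obstacle is the third step. Within the brief itself everything is structural, but downstream stages may consume \emph{outputs} of upstream stages. For example, summarization reads the sources selected under $\sigma$, so $\sigma$ can influence $f_{\text{sum}}$ indirectly. I would handle this by stating the proposition at the level of prompt templates. The claim is that $\sigma$ never appears in $f_{\text{sum}}$'s template, so no \emph{template} change is needed, while data-flow influence is explicitly allowed. Establishing that each $A_i$ really is as listed is a matter of inspecting the pipeline specification in Section~\ref{sec:architecture}, and I would cite that table of projections rather than derive it.
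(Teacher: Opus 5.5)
The step that fails is your engineering corollary, and it fails against the paper's own specification, which you cite rather than check. The LOI Data Flow subsection of Section~\ref{sec:architecture} sets $\pi_{\text{keyword}} = \{u,\sigma,\kappa,\lambda,M\}$, $\pi_{\text{selection}} = \{u,\sigma\}$ and $\pi_{\text{summary}} = \{u,\sigma,\tau\}$. It also says explicitly that $\sigma$ ``carries LOI objectives to both selection and summarization stages.'' Plugging these into your definition gives $E_{\text{keyword}} = \{\kappa,\lambda,M\}$, $E_{\text{selection}} = \emptyset$ and $E_{\text{summary}} = \{\tau\}$. So $\sigma \notin E_{\text{selection}}$, $\sigma \in A_{\text{summary}}$, and modifying $\sigma$ \emph{does} change the summarization prompt. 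Your argument is therefore not symmetric between $\sigma$ and $\tau$: it goes through for $\tau$, which lies in no other projection, and fails for $\sigma$. The template-level fallback does not repair this. The claim that $\sigma$ never appears in $f_{\text{sum}}$'s template is false, since $\sigma$ is interpolated there. If ``no template change'' only means that slot values can vary without editing template text, that holds for every field in every stage and makes no use of the projections, so it is no longer a decoupling statement.

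Your exclusive-field reading also makes the first sentence true by definition, since $A_i \cap E_j = \emptyset$ is a tautology. The literal reading $\pi_i \cap \pi_j = \emptyset$ is false, because $u$ and $\sigma$ lie in all three projections. The paper gives no proof of this proposition, only the projection table and the remark that each stage receives only its relevant fields, so nothing there rescues the $\sigma$ direction.

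What the specification does support is the restricted claim that your step three actually proves. If stage $i$'s prompt depends on $\cB$ only through $\pi_i(\cB)$, then changing a field $x \notin \pi_i$ leaves its brief-derived prompt content unchanged. Concretely, edits to $\tau$ do not touch the selection or keyword prompts, and edits to $\kappa,\lambda,M$ do not touch the selection or summarization prompts. Edits to $u$ or $\sigma$ reach all three stages by design, which is how LOI propagates. A correct write-up should prove that restricted version and state plainly that the $\sigma$-to-summarization half of the proposition contradicts the paper's own data flow.
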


The brief's $\sigma$ field (source strategy) is the primary carrier of LOI objectives. By embedding concepts like ``first-party sources rank above third-party aggregators'' into $\sigma$, the planning model learns to produce strategy text that \emph{induces} first-party preference in the relevance model \emph{without naming specific domains}.

\section{Weighted Locale Allocation}\label{sec:allocation}

\subsection{Problem Statement}

\begin{definition}[Weighted Locale Allocation]\label{def:allocation}
Given a set of $n$ locales $L = \{\ell_1, \ldots, \ell_n\}$ with associated positive integer weights $w = (w_1, \ldots, w_n) \in (\Z^+)^n$ and a total budget $T \geq n$, compute an allocation vector $c = (c_1, \ldots, c_n) \in (\Z^+)^n$ satisfying:
\begin{align}
  c_i &\geq 1 && \forall\, i \in [n] \label{eq:min-rep} \\
  \sum_{i=1}^n c_i &= T \label{eq:budget} \\
  \left|c_i - \left(1 + \frac{(T-n) \cdot w_i}{W}\right)\right| &< 1 && \forall\, i \in [n] \label{eq:prop}
\end{align}
where $W = \sum_{i=1}^n w_i$.
\end{definition}

Constraint~\eqref{eq:min-rep} guarantees minimum representation. Constraint~\eqref{eq:budget} ensures complete budget utilization. Constraint~\eqref{eq:prop} bounds deviation from ideal proportional allocation.

\subsection{Algorithm}

\begin{algorithm}[H]
\caption{\textsc{AllocateLocaleCounts}$(L, w, T)$}\label{alg:allocate}
\begin{algorithmic}[1]
\Require Locales $L$, weights $w \in (\Z^+)^n$, budget $T \geq n$
\Ensure Allocation $c \in (\Z^+)^n$ satisfying Def.~\ref{def:allocation}
\State $c_i \gets 1$ for all $i \in [n]$ \Comment{Phase 1: Base}
\State $R \gets T - n$; $W \gets \sum w_i$
\For{$i = 1$ to $n$} \Comment{Phase 2: Proportional}
  \State $r_i \gets (w_i / W) \cdot R$
  \State $c_i \gets c_i + \lfloor r_i \rfloor$
\EndFor
\State $R' \gets R - \sum_{i=1}^n \lfloor r_i \rfloor$
\State $\pi \gets \text{argsort}((r_i - \lfloor r_i \rfloor)_{i \in [n]}, \text{desc})$ \Comment{Phase 3: Remainder}
\For{$j = 1$ to $R'$}
  \State $c_{\pi(j)} \gets c_{\pi(j)} + 1$
\EndFor
\State \Return $c$
\end{algorithmic}
\end{algorithm}

\begin{theorem}[Correctness of \textsc{AllocateLocaleCounts}]\label{thm:allocate}
Algorithm~\ref{alg:allocate} satisfies all three constraints of Definition~\ref{def:allocation}.
\end{theorem}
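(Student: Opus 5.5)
We verify the three constraints of Definition~\ref{def:allocation} one at a time, tracking the final value of each $c_i$ through the three phases of Algorithm~\ref{alg:allocate}. Write $R = T-n \ge 0$, $r_i = (w_i/W)R$, and $f_i = r_i - \lfloor r_i\rfloor \in [0,1)$. Let $\delta_i \in \{0,1\}$ indicate whether index $i$ is among the first $R'$ entries of $\pi$. Then the output is $c_i = 1 + \lfloor r_i\rfloor + \delta_i$. The whole proof is organized around this closed form.

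\emph{Minimum representation} is immediate. Since $R\ge 0$ and $w_i/W>0$, we have $r_i\ge 0$, hence $\lfloor r_i\rfloor\ge 0$ and $c_i\ge 1$.

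\emph{Budget exhaustion} needs a preliminary fact: $0\le R' \le n-1 < n$. This makes Phase 3 well defined, since it increments $R'$ distinct indices (the loop never wraps around $\pi$). To prove the fact, use $\sum_i r_i = R$, an integer. Then $R' = \sum_i (r_i-\lfloor r_i\rfloor) = \sum_i f_i$. This is a nonnegative integer, and it is strictly less than $n$ because each $f_i<1$. Given this, $\sum_i c_i = n + \sum_i\lfloor r_i\rfloor + R' = n + R = T$.

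\emph{Proportionality} follows from the closed form, since the target is $1 + r_i$. We get $c_i - (1+r_i) = \delta_i - f_i$. If $\delta_i=0$, this equals $-f_i\in(-1,0]$. If $\delta_i=1$, it equals $1-f_i\in(0,1]$.

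The main obstacle is the boundary case $\delta_i=1$ with $f_i=0$, where the deviation would equal $1$ and violate the strict inequality~\eqref{eq:prop}. I would rule it out by a counting argument on the sorted remainders. Suppose some index with $f_i=0$ receives a Phase~3 increment. Because $\pi$ sorts the remainders in descending order, every index ranked before $i$ is also incremented, and every index ranked after $i$ has $f_j\le f_i = 0$, so $f_j = 0$. The increments therefore cover all indices with positive remainder plus at least one index with zero remainder. This gives $R' \ge |\{j: f_j>0\}| + 1$. On the other hand, $R' = \sum_j f_j < |\{j : f_j>0\}|$ whenever that set is nonempty. If the set is empty, then $R'=0$ and no increments occur at all. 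Either way we have a contradiction, so every incremented index has $f_i>0$, and the deviation lies strictly inside $(-1,1)$.

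Finally, the complexity: the argsort costs $O(n\log n)$ and everything else is linear. I would note this only as a side remark, since the theorem concerns correctness.
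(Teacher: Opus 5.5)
Your proof is correct and follows the same route as the paper: write $c_i = 1 + \lfloor r_i\rfloor + \delta_i$ and check the three constraints in turn. It is more complete than the paper's proof. The paper never verifies $R' < n$, which is needed for $\delta_i \in \{0,1\}$, and it closes the proportionality bound with $\max(f_i, 1-f_i) < 1$, an inequality that fails when $f_i = 0$; your identity $R' = \sum_i f_i$ and your counting argument (only indices with $f_i > 0$ are ever incremented) supply exactly what is missing.
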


\begin{proof}
\textbf{Constraint~\eqref{eq:min-rep}:} Phase~1 sets $c_i = 1$ for all~$i$; Phases~2--3 only increment.

\textbf{Constraint~\eqref{eq:budget}:} Phase~1 allocates $n$ slots. Phase~2 allocates $\sum_i \lfloor r_i \rfloor$. Phase~3 allocates $R' = R - \sum_i \lfloor r_i \rfloor$. Total: $n + \sum_i \lfloor r_i \rfloor + R' = n + R = T$.

\textbf{Constraint~\eqref{eq:prop}:} Let $\hat{c}_i = 1 + r_i$ be the ideal allocation. Then $c_i = 1 + \lfloor r_i \rfloor + \delta_i$ where $\delta_i \in \{0, 1\}$ is the remainder correction. Thus:
\begin{equation}
  |c_i - \hat{c}_i| = |\lfloor r_i \rfloor + \delta_i - r_i| \leq \max(r_i - \lfloor r_i \rfloor, 1 - r_i + \lfloor r_i \rfloor) < 1 \qedhere
\end{equation}
\end{proof}

\begin{corollary}[Complexity]\label{cor:alloc-complexity}
Algorithm~\ref{alg:allocate} runs in $O(n \log n)$ time, dominated by the sort in Phase~3. For $n \leq 4$ (typical deployment), this is $O(1)$.
\end{corollary}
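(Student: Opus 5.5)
The plan is to bound the cost of Algorithm~\ref{alg:allocate} phase by phase, in the unit-cost RAM model where arithmetic on the weights, floors and comparisons each take $O(1)$ time. Summing the phase costs will show the total is $O(n\log n)$ and that the sort is the only super-linear step. For the second claim, I will observe that the deployment fixes $n\le 4$, so this bound becomes a constant.

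\textbf{Phase 1.} Setting $c_i\gets 1$ for all $i$ costs $O(n)$. Computing $R=T-n$ is $O(1)$, and computing $W=\sum_i w_i$ is one pass, so $O(n)$.

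\textbf{Phase 2.} Each iteration of the loop performs a constant number of operations: one division, one multiplication, one floor and one addition. The loop therefore costs $O(n)$. Computing $R'=R-\sum_i\lfloor r_i\rfloor$ is one more pass, also $O(n)$.

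\textbf{Phase 3, the sort.} Forming the fractional parts $r_i-\lfloor r_i\rfloor$ costs $O(n)$. Sorting them with any comparison sort, such as merge sort or heapsort, costs $O(n\log n)$; this is the dominant term.

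\textbf{Phase 3, the increment loop.} This step needs a bound on $R'$, which is the one point requiring care. Since $\sum_i r_i=R$ and each fractional part lies in $[0,1)$,
\[
R'=\sum_i\bigl(r_i-\lfloor r_i\rfloor\bigr)<n .
\]
Because $R'$ is an integer, $R'\le n-1$. The loop therefore runs fewer than $n$ times and costs $O(n)$. This also confirms that the indices $\pi(j)$ stay in range, consistent with the proof of Theorem~\ref{thm:allocate}.

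\textbf{Total.} Adding the phases gives $O(n)+O(n)+O(n\log n)+O(n)=O(n\log n)$, so the sort dominates.

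\textbf{The case $n\le 4$.} With $n$ bounded by a fixed constant, $n\log n\le 4\log 4$. The running time is then bounded by a constant independent of $T$ and of the weights, i.e.\ $O(1)$. I will note that the bound is independent of $T$ only because of the unit-cost arithmetic assumption stated at the outset.

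The step I expect to be the main obstacle is the bound $R'<n$. Without it, the final loop could appear to cost $O(T)$. The argument above settles it, so I do not anticipate a real difficulty.
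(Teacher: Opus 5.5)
Your proposal is correct and takes the same approach as the paper, whose only justification is the remark that the Phase~3 sort dominates. Your bound $R' = \sum_i (r_i - \lfloor r_i \rfloor) \le n-1$ fills in a step the paper leaves implicit: it makes the increment loop cost $O(n)$ rather than $O(T)$, and it is also what makes $\pi(j)$ well defined and $\delta_i \in \{0,1\}$ in the proof of Theorem~\ref{thm:allocate}.
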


\subsection{Locale Mix Normalization}

The allocation's input weights derive from the research brief's locale mix $M$. We normalize $M$ through a validation step:

\begin{definition}[Normalized Locale Mix]\label{def:norm-mix}
Given raw mix $M_{\text{raw}}$ and default locale $\ell_0 = (g_0, h_0)$, the normalized mix $\hat{M}$ satisfies:
\begin{enumerate}[nosep,label=(\roman*)]
  \item All entries have valid $g \in \cG$ and $h \in \cL$
  \item No duplicate $(g, h)$ pairs
  \item $|\hat{M}| \leq m_{\max}$ (maximum locale count)
  \item $\ell_0 \in \hat{M}$ (default locale guaranteed)
  \item All weights $w_i \geq 1$
\end{enumerate}
\end{definition}

The normalization guarantees that the user's locale always receives at least one keyword slot regardless of the planning model's output, providing a safety net that does not override the model's judgment.

\section{Cascaded Country-Code Inference}\label{sec:cascade}

\subsection{Inference Function}

\begin{definition}[Country-Code Inference]\label{def:cascade}
The inference function $\gamma: \cS \to \cG \cup \{\varepsilon\}$ maps a source to a country code (or empty) via the following priority chain, returning the result of the first satisfied condition:
\begin{equation}\label{eq:cascade}
\gamma(s) = \begin{cases}
  \text{Override}(\tld(u)) & \text{if } \tld(u) \in \cO \\
  \tld(u) & \text{if } |\tld(u)| = 2 \wedge \tld(u) \notin \cG_{\text{gen}} \\
  p & \text{if } \tld(u) \in \cG_{\text{gen}} \wedge p \neq \varepsilon \\
  \Lambda(\ell) & \text{if } \tld(u) \in \cG_{\text{gen}} \wedge p = \varepsilon \wedge \ell \neq \varepsilon \\
  \varepsilon & \text{otherwise}
\end{cases}
\end{equation}
where $\cO$ is the TLD override map (e.g., $\texttt{uk} \mapsto \texttt{gb}$), $p$ is the model-inferred publisher country, $\ell$ is the source language, and $\Lambda: \cL \to \cG$ is the language-to-country fallback map.
\end{definition}

\begin{proposition}[Determinism]\label{prop:determinism}
For fixed inputs $(u, p, \ell)$, the cascade is deterministic: the first satisfied condition uniquely determines $\gamma$.
\end{proposition}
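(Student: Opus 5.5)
The plan is to show that $\gamma$, as given by \eqref{eq:cascade}, is a well-defined total function of the triple $(u,p,\ell)$. Concretely, for every input at least one branch fires, and the value returned by the first firing branch depends only on $(u,p,\ell)$. I would first fix the primitive maps the conditions use: $\tld(\cdot)$ (via $\dom$), the finite set $\cO$ with its map $\text{Override}$, the set $\cG_{\text{gen}}$, and $\Lambda$. Each of these is a deterministic string function or a fixed lookup table. This holds provided $\Lambda$ is a genuine function $\cL\to\cG$, i.e.\ one country per language, and $\cO$ is a map. So each predicate in the chain, such as $\tld(u)\in\cO$, $|\tld(u)|=2\wedge\tld(u)\notin\cG_{\text{gen}}$, or $p\neq\varepsilon$, has a definite truth value on fixed inputs. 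Each candidate output, $\text{Override}(\tld(u))$, $\tld(u)$, $p$, or $\Lambda(\ell)$, is a single determined element of $\cG\cup\{\varepsilon\}$.

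Next I would formalize the ``first satisfied condition'' semantics. Let $C_1,\dots,C_5$ be the five conditions, with $C_5\equiv\text{true}$ (the ``otherwise'' case), and let $v_1,\dots,v_5$ be the corresponding values. Define $j^*(u,p,\ell)=\min\{j: C_j(u,p,\ell)\}$. Because $C_5$ always holds, this set is nonempty, and the minimum of a nonempty subset of $\{1,\dots,5\}$ is unique. Then $\gamma(s)=v_{j^*}$, which is a single value, and determinism follows immediately. As a sanity remark, I would note that the conditions are in fact almost mutually exclusive. Branches 3 and 4 require $\tld(u)\in\cG_{\text{gen}}$, branch 2 requires $\tld(u)\notin\cG_{\text{gen}}$, and branches 3 and 4 split on $p$. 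The only possible overlap is between $\cO$ and the others (e.g.\ \texttt{uk} also satisfies branch 2). That overlap is exactly why the priority order is needed, and it is resolved by the minimum.

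The main obstacle is not the logic but the modelling: $p$ is ``model-inferred'' and therefore stochastic upstream. I would handle this by stating clearly that $p$ is treated as a fixed input, as the proposition stipulates. Determinism is then a claim about $\gamma$ conditional on $(u,p,\ell)$, not about the LLM producing $p$. I would also check that $\tld(u)$ is well-defined, i.e.\ that $\dom$ extraction is a fixed deterministic parse. If a URL has no parseable TLD, the value falls through to $\varepsilon$ via $C_5$, so totality is preserved.
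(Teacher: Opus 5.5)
Your proof is correct and is essentially the paper's argument: uniqueness comes from the priority order. You make that precise by setting $\gamma(s)=v_{j^*}$ with $j^*=\min\{j : C_j(u,p,\ell)\}$, where the always-true fallback $C_5$ guarantees the set is nonempty. Your remark that branches 1 and 2 can both fire (e.g.\ on \texttt{uk}) is also right: it shows the paper's appeal to ``mutually exclusive triggers at each level'' is not literally true, and that determinism rests on the first-satisfied (minimum-index) semantics alone, exactly as your proof uses it.
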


\begin{proof}
The conditions form a strict priority chain with mutually exclusive triggers at each level. Specifically, Priority~1 is checked before Priority~2, and its condition ($\tld(u) \in \cO$) either holds or does not, yielding a unique path through the chain.
\end{proof}

\begin{proposition}[Graceful Degradation]\label{prop:degradation}
If all signals are absent (generic TLD, no model metadata, no language signal), $\gamma(s) = \varepsilon$. No incorrect inference is produced.
\end{proposition}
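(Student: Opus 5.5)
The plan is a direct case analysis on the priority chain in \eqref{eq:cascade}. First we fix what ``all signals are absent'' means: $\tld(u) \in \cG_{\text{gen}}$, $p = \varepsilon$, and $\ell = \varepsilon$. Under this hypothesis we check each branch of Definition~\ref{def:cascade} in priority order. By Proposition~\ref{prop:determinism}, $\gamma(s)$ is the value of the first branch whose condition holds. So it suffices to show that the first four conditions all fail, and the ``otherwise'' branch then returns $\varepsilon$.

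The key steps, in order, are as follows.

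\begin{itemize}
  \item \emph{Branch~1.} We need $\tld(u) \notin \cO$. This is the one point that needs an assumption about the override map. We require that $\cO \cap \cG_{\text{gen}} = \emptyset$, i.e.\ overrides only remap country-code TLDs such as $\texttt{uk} \mapsto \texttt{gb}$. I would state this as a standing well-formedness condition on $\cO$, justified by the role of $\cO$ described in Definition~\ref{def:cascade}.
  \item \emph{Branch~2.} Its condition explicitly requires $\tld(u) \notin \cG_{\text{gen}}$, which contradicts the hypothesis. Note that some generic TLDs have length $2$ (\texttt{io}, \texttt{ai}, \texttt{co}), so the exclusion clause, not the length test, is what does the work here.
  \item \emph{Branch~3.} It requires $p \neq \varepsilon$, which fails.
  \item \emph{Branch~4.} It requires $\ell \neq \varepsilon$, which fails.
\end{itemize}

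Hence $\gamma(s) = \varepsilon$.

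For the second sentence, ``no incorrect inference is produced,'' we read $\varepsilon$ as the distinguished abstention value outside $\cG$. Since the output is not an element of $\cG$, it asserts no country code, and so it cannot be a wrong one.

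The main obstacle is Branch~1. As written, the definition places no restriction on $\cO$, so a generic TLD could in principle be overridden to a country code. I would handle this first by making the disjointness assumption $\cO \cap \cG_{\text{gen}} = \emptyset$ explicit. A secondary subtlety is checking that the length-$2$ generic TLDs cannot slip through Branch~2; this is settled by the $\notin \cG_{\text{gen}}$ clause.
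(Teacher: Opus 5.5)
Your argument is correct and is the same branch-by-branch unfolding of Definition~\ref{def:cascade} that the paper relies on (the paper states this proposition without proof, as immediate from the cascade). Your explicit hypothesis $\cO \cap \cG_{\text{gen}} = \emptyset$ is a real improvement, not a gap: without it Branch~1 could fire on a generic TLD and return $\text{Override}(\tld(u)) \in \cG$, so the paper's statement silently depends on that condition.
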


\subsection{Signal Reliability Ordering}

The priority ordering reflects empirical signal reliability:

\begin{lemma}[TLD Reliability]\label{lem:tld}
Country-code TLDs (e.g., \texttt{.de}, \texttt{.tr}) have accuracy $\geq 0.95$ for publisher country inference, as domain registration under country-code TLDs typically requires local presence or targeting intent. Generic TLDs carry no geographic signal ($\text{accuracy} \approx 0.5$).
\end{lemma}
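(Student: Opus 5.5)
The statement of Lemma~\ref{lem:tld} is an empirical claim, not a mathematical consequence of the definitions so far. Nothing in Definition~\ref{def:cascade} or Section~\ref{sec:notation} ties TLDs to publisher countries. So the plan is not to derive it from earlier results. Instead I would make its hidden assumptions explicit as a modeling hypothesis, reduce the lemma to that hypothesis, and state plainly which part must be verified by measurement.

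First, I would fix a probability model. Let sources $s=(u,t,\sigma)$ be drawn from a distribution $\cD$ over the ranked results, and let $g^\star(s)\in\cG$ be the true publisher country. Define the accuracy of a signal $\gamma'$ as $\Pr_{s\sim\cD}[\gamma'(s)=g^\star(s)\mid \gamma'(s)\neq\varepsilon]$.

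For the country-code claim, I would split the event $\tld(u)\in\cG\setminus\cG_{\text{gen}}$ according to whether the registry enforces local presence (e.g.\ \texttt{.de}, \texttt{.tr}) or not. I would then bound the error by the mass of \emph{vanity ccTLDs}, such as \texttt{.tv}, \texttt{.me}, \texttt{.ly}, together with foreign-registered domains. This gives accuracy $\ge 1-\Pr[\text{vanity}\mid\text{ccTLD}]-\Pr[\text{foreign}\mid\text{non-vanity ccTLD}]$. The $0.95$ bound then follows exactly when the sum of those two masses is at most $0.05$. That condition is the hypothesis the argument needs, and I would state it as such. It also explains why $\cG_{\text{gen}}$ already absorbs \texttt{io}, \texttt{ai} and \texttt{co}: removing these repurposed ccTLDs is precisely what keeps the vanity mass small.

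For the generic-TLD claim, ``accuracy $\approx 0.5$'' needs a baseline to be meaningful. I would interpret it as the accuracy of the best constant guess, typically \texttt{us}, under $\cD$ restricted to $\tld(u)\in\cG_{\text{gen}}$. In other words, it equals $\max_g \Pr[g^\star=g\mid \tld\in\cG_{\text{gen}}]$. Since the TLD is constant across this class, no TLD-based rule can beat this quantity. That upper bound is a rigorous statement: any function of the TLD alone is constant on a single generic TLD, so its accuracy is at most the majority-class rate.

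The main obstacle is that neither numeric constant follows from anything proved. Both must come from measurement, for example a labeled sample from the 120-query evaluation of Section~\ref{sec:eval}. On such a sample I would report a Wilson or Hoeffding confidence interval, and the $\ge 0.95$ claim holds only with the corresponding confidence. I would therefore present the result as: \emph{under the stated vanity/foreign-mass assumption, and conditional on the empirical estimate}, ccTLD accuracy is at least $0.95$, and generic TLDs carry at most majority-class information.
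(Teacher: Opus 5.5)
The paper gives no proof of this lemma. Its only support is the clause inside the statement about registration policy. None of the metrics in Section~\ref{sec:eval} measures per-signal country accuracy, so neither the $0.95$ nor the $0.5$ figure is ever verified. Your diagnosis that this is an empirical hypothesis, not a consequence of the definitions, is therefore right. Making the assumption and the measurement protocol explicit already goes beyond what the paper provides.

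Two calibrations on the ccTLD part:
\begin{itemize}
\item It is not a reduction to anything weaker. Accuracy on that branch is by definition one minus the mismatch rate, so your vanity/foreign condition is the lemma restated with errors split by cause. Also, in-country owners of vanity domains are counted as errors, so the condition is sufficient, not ``exactly'' equivalent.
\item The event should be the branches of $\gamma$ actually used, not $\tld(u)\in\cG\setminus\cG_{\text{gen}}$. That means priority~1 via $\cO$ (for example \texttt{uk}, which is not in $\cG$) and priority~2, which also fires on two-letter non-country TLDs such as \texttt{eu}.
\end{itemize}

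The step you call rigorous in the generic-TLD part is wrong as stated. $\cG_{\text{gen}}$ contains ten distinct TLDs, so the TLD is not constant on the class. A TLD-based rule may choose a different country per TLD; for instance \texttt{gov} and \texttt{edu} are overwhelmingly \texttt{us}. Its accuracy ceiling is
\[
\sum_{t\in\cG_{\text{gen}}}\Pr\bigl[\tld(u)=t \mid \tld(u)\in\cG_{\text{gen}}\bigr]\,\max_{g}\Pr\bigl[g^\star=g \mid \tld(u)=t\bigr],
\]
which is at least, and can strictly exceed, $\max_g\Pr[g^\star=g\mid\tld(u)\in\cG_{\text{gen}}]$. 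Equality, and hence your class-wide majority-rate bound, holds only when a single country is the majority for every generic TLD in the support. Independence of $g^\star$ from which generic TLD is used would give this, but that is just the ``no geographic signal'' claim you were trying to support.

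Either adopt that as an explicit hypothesis, or state the bound per TLD. Your argument (a function of the TLD is constant on a single TLD) is correct per TLD. The \texttt{gov}/\texttt{edu} example also suggests the paper's own ``no geographic signal'' assertion is doubtful for part of $\cG_{\text{gen}}$.
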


This motivates placing TLD-based inference at the highest priority and model-inferred metadata (which depends on LLM accuracy) at a lower tier.

\section{Domain-Level Source Diversification}\label{sec:diversity}

\subsection{Problem: URL-Level Deduplication Failure}

Standard deduplication checks URL equality: $u_i \neq u_j$ for selected sources $s_i, s_j$. This fails when aggregator platforms serve content through multiple URL paths (e.g., TripAdvisor generates unique URLs per listing). Five distinct URLs from one domain represent five selections from one publisher.

\begin{definition}[Domain Diversity Constraint]\label{def:domain-div}
Let $S = \{s_1, \ldots, s_k\}$ be a selected source set. The $\kappa$-domain diversity constraint requires:
\begin{equation}\label{eq:domain-div}
  \forall\, d \in \{\dom(u_i) : s_i \in S\}: \card\{s \in S : \dom(u_s) = d\} \leq \kappa
\end{equation}
where $\kappa \in \Z^+$ is the maximum per-domain allowance. We use $\kappa = 1$ (strict uniqueness).
\end{definition}

\subsection{Selection Algorithm}

\begin{algorithm}[H]
\caption{\textsc{SelectWithDiversity}$(K, q, \kappa)$}\label{alg:diversity}
\begin{algorithmic}[1]
\Require Keywords $K$, query $q$, max per-domain $\kappa$
\Ensure Source set $S$ satisfying Def.~\ref{def:domain-div}
\State $S \gets \emptyset$; $U \gets \emptyset$; $D \gets \text{HashMap}()$
\For{each $k \in K$}
  \State $R_k \gets \textsc{Search}(k)$ \Comment{Parallel with timeout}
  \State $s^* \gets \textsc{RankByRelevance}(R_k, q)$
  \If{$u_{s^*} \in U$}
    \State $s^* \gets \textsc{NextBest}(R_k, U)$
  \EndIf
  \If{$D[\dom(u_{s^*})] \geq \kappa$}
    \State $s^* \gets \textsc{NextDifferentDomain}(R_k, D, \kappa, U)$
  \EndIf
  \If{$s^* \neq \bot$}
    \State $S \gets S \cup \{s^*\}$; $U \gets U \cup \{u_{s^*}\}$
    \State $D[\dom(u_{s^*})] \mathrel{+}= 1$
  \EndIf
  \If{$|S| \geq S_{\max}$} \textbf{break} \EndIf
\EndFor
\State \Return $S$
\end{algorithmic}
\end{algorithm}

\begin{proposition}[Correctness]\label{prop:diversity-correct}
Algorithm~\ref{alg:diversity} maintains the invariant $\card\{s \in S : \dom(u_s) = d\} \leq \kappa$ after each iteration.
\end{proposition}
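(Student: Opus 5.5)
The proof will be by induction on the loop iterations of Algorithm~\ref{alg:diversity}. The invariant is
\[
I:\quad D[d] = \card\{s \in S : \dom(u_s) = d\} \le \kappa \quad \text{for every domain } d,
\]
with absent keys of $D$ read as $0$. Carrying the equality between the hash-map counter and the true per-domain count inside the invariant is what lets the algorithm's check on $D$ speak about $S$ itself.

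For the base case, before the first iteration we have $S=\emptyset$ and $D$ empty, so every count is $0 \le \kappa$ because $\kappa \in \Z^+$.

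For the inductive step, assume $I$ holds at the start of an iteration. The only lines that modify $S$ or $D$ are the insertion step: $S$ gains exactly one element $s^*$ and $D[\dom(u_{s^*})]$ is incremented once. That element is new, since $u_{s^*}\notin U$ and $U$ records the URLs of $S$, so the equality part of $I$ is preserved. The inequality part can only fail for $d^*=\dom(u_{s^*})$, and I will show that $D[d^*] < \kappa$ just before the increment. There are two cases, depending on how $s^*$ was finally fixed.
\begin{itemize}
  \item \emph{The domain check passed.} If the test $D[\dom(u_{s^*})] \ge \kappa$ was false, then $D[d^*] \le \kappa - 1$, so after the increment $D[d^*] \le \kappa$.
  \item \emph{\textsc{NextDifferentDomain} was called.} I will rely on its specification: it returns either $\bot$ or a source in $R_k$ whose URL is not in $U$ and whose domain $d$ satisfies $D[d] < \kappa$. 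If it returns $\bot$, nothing changes and $I$ persists. Otherwise the same bound as in the first case applies.
\end{itemize}
The early \textbf{break} and the case $s^*=\bot$ leave $S$ and $D$ untouched, so they preserve $I$ trivially.

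The main obstacle is that the paper never specifies the subroutines \textsc{NextBest} and \textsc{NextDifferentDomain}. I will state their contracts explicitly as assumptions, matching their names and arguments: \textsc{NextBest} returns a source whose URL is not in $U$, or $\bot$; \textsc{NextDifferentDomain} returns a source with an unsaturated domain and a fresh URL, or $\bot$. A second subtlety follows from this. The URL check runs before the domain check, so the final $s^*$ is always the output of the last check or subroutine applied. I will also note that the domain test must be evaluated only when $s^*\neq\bot$, since $\dom(u_\bot)$ is undefined. With these conventions the case analysis above is exhaustive, and the induction gives the invariant after every iteration.
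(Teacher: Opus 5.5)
Your proposal is correct and follows the same approach as the paper: an induction with base case $S=\emptyset$, where the domain check and the \textsc{NextDifferentDomain} fallback preserve the bound at each step. Your version is more careful than the paper's one-line argument in two ways: you strengthen the invariant so that $D[d]$ equals the true per-domain count, and you spell out the contract \textsc{NextDifferentDomain} must satisfy (an unsaturated domain with a fresh URL, not merely a ``different'' domain), both of which the paper leaves implicit.
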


\begin{proof}
The check at line~7 prevents insertion when the count for $\dom(u_{s^*})$ reaches $\kappa$, and line~8 finds an alternative from a different domain. The invariant holds at initialization ($S = \emptyset$) and is preserved by each step.
\end{proof}

\begin{proposition}[Complexity]\label{prop:diversity-complex}
The domain lookup at line~7 is $O(1)$ amortized via hash map. Overall complexity is $O(|K| \cdot R)$ where $R = \max_k |R_k|$.
\end{proposition}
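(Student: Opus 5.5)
The proof is a per-iteration cost accounting for Algorithm~\ref{alg:diversity}, summed over the at most $|K|$ iterations of the outer loop. Throughout I treat \textsc{Search}$(k)$ as an oracle call. Its network latency is bounded by the timeout and is not counted as computation, so each iteration's work is measured in terms of the returned list $R_k$, with $|R_k| \le R$. I also assume the standard hashing model: membership queries, lookups, and increments in the URL set $U$ and the domain map $D$ take $O(1)$ expected (amortized) time. $\dom(u)$ and $\tld(u)$ extraction is treated as constant cost for URLs of bounded length.

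\textbf{Step 1 (line 7).} The check $D[\dom(u_{s^*})] \ge \kappa$ is a single hash-map lookup keyed by the registrable domain, so it costs $O(1)$ amortized. The update $D[\dom(u_{s^*})] \mathrel{+}= 1$ and the insertion into $U$ each cost the same. Resizing the table is charged to insertions in the usual doubling argument, which gives the amortized bound. This establishes the first claim of Proposition~\ref{prop:diversity-complex}.

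\textbf{Step 2 (one iteration).} I bound each remaining routine by $O(|R_k|)$.
\begin{itemize}
\item \textsc{RankByRelevance}$(R_k,q)$ returns only the top element, so it is a single linear scan. This needs the per-source relevance score to be $O(1)$, i.e.\ $q$ and snippets of bounded length, or a precomputed score.
\item \textsc{NextBest} scans $R_k$ in rank order, testing each URL against $U$ in $O(1)$.
\item \textsc{NextDifferentDomain} scans $R_k$ and tests each candidate against both $U$ and $D$, each in $O(1)$ amortized.
\end{itemize}
These are at most three linear passes, each with constant work per element. So one iteration costs $O(|R_k|) \subseteq O(R)$.

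\textbf{Step 3 (total).} Summing over the loop gives $\sum_{k\in K} O(|R_k|) = O(|K|\cdot R)$. The early break at $|S|\ge S_{\max}$ can only reduce this.

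\textbf{Main obstacle.} The delicate point is the relevance-ranking step. If \textsc{RankByRelevance} is read as fully sorting $R_k$, it costs $O(R\log R)$ and the bound fails. Likewise, \textsc{NextBest} and \textsc{NextDifferentDomain} need a ranked order to walk. I would resolve this in one of two ways. The first is to assume the search oracle already returns $R_k$ in ranked order, so both fallbacks become a single forward scan. The second is to compute scores once in $O(R)$ and implement each fallback as a linear scan for the best-scoring admissible element, not a scan over a sorted list. Either choice keeps every iteration linear. I would state this assumption explicitly, alongside the hashing model, as the hypothesis under which the proposition holds.
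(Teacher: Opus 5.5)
Your proposal is correct and uses the same argument the paper gives. Since the proposition has no separate proof, that argument appears inside the proof of Theorem~\ref{thm:aggregate}: $|K|$ loop iterations, at most $R$ candidates each, and $O(1)$ amortized hash-map operations at lines 5, 7, and 12, giving $O(|K|\cdot R)$. Your version is more careful in that it states the hypotheses the paper leaves implicit: \textsc{Search} and the relevance scorer are unit-cost oracles, and \textsc{RankByRelevance}, \textsc{NextBest}, and \textsc{NextDifferentDomain} are linear scans rather than a full $O(R\log R)$ sort.
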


\subsection{Interaction with LOI}

The domain diversity constraint operates at the \emph{algorithmic} level (a structural guarantee), while LOI operates at the \emph{probabilistic} level (shifting model preferences). These are complementary:

\begin{itemize}[nosep,leftmargin=*]
  \item LOI shapes the relevance model's scoring so that first-party sources receive higher relevance scores, reducing the \emph{frequency} with which aggregators are top-ranked.
  \item The diversity constraint acts as a safety net: even if LOI fails for a particular keyword (the model still ranks an aggregator first), the constraint prevents multiple selections from that domain.
\end{itemize}

\begin{theorem}[Combined Effectiveness]\label{thm:combined}
Let $p_a$ be the probability that an aggregator is top-ranked for a given keyword under LOI, and $p_a^0$ the baseline probability without LOI ($p_a \leq p_a^0$). With $|K|$ keywords and $\kappa = 1$, the expected number of aggregator sources in $S$ is:
\begin{equation}\label{eq:combined}
  \mathbb{E}[\card\{s \in S : \text{is\_aggregator}(s)\}] \leq \min\bigl(|K| \cdot p_a,\; |\cD_a|\bigr)
\end{equation}
where $\cD_a$ is the set of distinct aggregator domains in the index. With $\kappa = 1$, this is further bounded by $|\cD_a|$, which is typically small ($\leq 3$).
\end{theorem}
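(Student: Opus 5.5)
We bound the aggregator count $A = \card\{s \in S : \text{is\_aggregator}(s)\}$ twice, once deterministically and once in expectation, and then combine the two bounds. Because $\mathbb{E}[A] \le \mathbb{E}[A]$ trivially and $\mathbb{E}[\min(X,Y)] \le \min(\mathbb{E}X, \mathbb{E}Y)$, it suffices to show $A \le |\cD_a|$ pointwise and $\mathbb{E}[A] \le |K|\cdot p_a$.

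For the deterministic bound we invoke Proposition~\ref{prop:diversity-correct} with $\kappa = 1$. After the last iteration of Algorithm~\ref{alg:diversity}, every domain $d$ contributes at most one source to $S$. The aggregator sources in $S$ therefore inject into the set of aggregator domains via $s \mapsto \dom(u_s)$. We need one convention made explicit here: is\_aggregator is a property of the domain, so every source from an aggregator domain is an aggregator. Under that convention the image lies in $\cD_a$, which gives $A \le |\cD_a|$ for every realization. This step also yields the last sentence of the theorem.

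For the expectation bound we write $A = \sum_{k \in K} A_k$. Here $A_k \in \{0,1\}$ indicates that iteration $k$ inserts an aggregator; each iteration inserts at most one source, and iterations after the break contribute zero. By linearity, $\mathbb{E}[A] = \sum_k P[A_k = 1]$. It then suffices to show $P[A_k = 1] \le p_a$, i.e.\ that iteration $k$ inserts an aggregator only if an aggregator is top-ranked for keyword $k$. This is the main obstacle. The fallbacks \textsc{NextBest} and \textsc{NextDifferentDomain} could, in principle, replace a first-party top result with an aggregator lower down the list. That happens when the top-ranked result's URL or domain was already used.

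I would handle this in one of two ways. The first is to read $p_a$ as the probability that the \emph{selected} candidate for keyword $k$ (after the fallback lines) is an aggregator, with LOI lowering this probability by shifting scores. The second is to add a mild assumption that fallbacks from a first-party top result land on first-party sources. Under either reading, $P[A_k=1]\le p_a$ holds, summing over $k$ gives $|K| p_a$, and the comparison $p_a \le p_a^0$ simply records that LOI does not worsen the bound relative to the baseline. I would state the chosen interpretation explicitly at the start of the proof, since the theorem as written leaves the event defining $p_a$ informal.
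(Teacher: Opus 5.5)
The paper states this theorem without proof. Its only support is the two bullets just before it: LOI makes aggregators top-ranked less often, and with $\kappa=1$ each aggregator domain can be selected at most once. Your decomposition formalizes exactly that rationale: linearity of expectation over per-keyword indicators $A_k$ for the $|K|\,p_a$ term, and the injection $s\mapsto\dom(u_s)$ into $\cD_a$ for the $|\cD_a|$ term. The $|\cD_a|$ half is correct as written. You are right that it needs $\text{is\_aggregator}$ to be a property of the domain, which the statement uses silently.

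The obstacle you flag in the $|K|\,p_a$ half is not a technicality. With $p_a$ read literally as the probability that an aggregator is top-ranked, the theorem is false. Let every keyword's top-ranked result be a page on one first-party domain $d_0$, and let the runner-up for keyword $k$ lie on a distinct aggregator domain $a_k$. Then $p_a=0$ and the right-hand side is $0$. But with $\kappa=1$ the first keyword selects from $d_0$. For each later keyword the top candidate is rejected (by the $U$-test if its URL repeats, by the $D$-test otherwise), so the fallback returns the page on $a_k$. The output therefore contains $\min(|K|,S_{\max})-1$ aggregators deterministically, which is positive once $|K|,S_{\max}\ge 2$.

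So the diversity constraint can itself push selection onto aggregators, and some hypothesis on the fallbacks is unavoidable. Present it as an added hypothesis, not as an interpretation of the statement. Of your two repairs, prefer the second. The first turns $\mathbb{E}[A]\le|K|\,p_a$ into a tautology and detaches $p_a$ from the ranking behaviour that LOI actually influences. Phrase the second in terms of non-aggregators rather than first-party sources, since the two classes are not complementary: a third-party non-aggregator top result falls outside your assumption, yet its fallback could be an aggregator. What you need is that \textsc{NextBest} and \textsc{NextDifferentDomain} never replace a non-aggregator top candidate by an aggregator, and that $p_a$ bounds the top-rank probability for every keyword. Then $A_k=1$ forces the top candidate for keyword $k$ to be an aggregator, and summing gives $|K|\,p_a$ with no independence across keywords required.

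Finally, tidy the opening. The inequality ``$\mathbb{E}[A]\le\mathbb{E}[A]$'' is vacuous, and the remark about $\mathbb{E}[\min(X,Y)]$ is unnecessary because $|\cD_a|$ is a fixed number. The pointwise bound gives $\mathbb{E}[A]\le|\cD_a|$ directly, and together with $\mathbb{E}[A]\le|K|\,p_a$ this yields the minimum.
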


\subsection{Aggregate Complexity}

\begin{theorem}[End-to-End Complexity]\label{thm:aggregate}
Let $n$ be the number of locales, $|K|$ the total keyword count, $R$ the maximum results per keyword, and $k$ the number of downstream LOI-consuming stages. The total algorithmic overhead introduced by our three algorithms is:
\begin{equation}
  T_{\text{total}} = \underbrace{O(n \log n)}_{\text{Alg.~\ref{alg:allocate}}} + \underbrace{O(|K|)}_{\text{Cascade}} + \underbrace{O(|K| \cdot R)}_{\text{Alg.~\ref{alg:diversity}}} = O(|K| \cdot R)
\end{equation}
since $n \leq |K|$ and $\log n \leq R$ in all practical configurations. The LOI convergence probability $1 - (1 - \alpha\beta)^k$ is computed in $O(1)$ and requires no runtime overhead beyond the initial brief generation.
\end{theorem}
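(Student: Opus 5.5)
The proof splits into the three algorithmic components, bounding each with the result already stated, and then sums the bounds under the stated parameter relations.

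\textbf{Allocation.} For Algorithm~\ref{alg:allocate} I invoke Corollary~\ref{cor:alloc-complexity} directly. It gives $O(n\log n)$: Phases~1 and~2 are single $O(n)$ passes, and Phase~3 is one sort followed by at most $R' < n$ increments. The bound $R' < n$ holds because each $r_i - \lfloor r_i \rfloor < 1$.

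\textbf{Cascade.} For the country-code inference of Definition~\ref{def:cascade}, I note that $\gamma$ is applied once per selected source. At most $|S| \le \min(S_{\max},|K|)$ sources are selected, since Algorithm~\ref{alg:diversity} adds at most one source per keyword. Each evaluation of $\gamma$ does a constant amount of work:
\begin{itemize}[nosep,leftmargin=*]
  \item extracting $\tld(u)$, which is linear in URL length, treated as bounded;
  \item membership tests in the fixed finite sets $\cO$ and $\cG_{\text{gen}}$;
  \item at most one lookup in $\Lambda$.
\end{itemize}
By Proposition~\ref{prop:determinism}, exactly one path through the chain is followed, so no branch is re-evaluated. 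This gives $O(|K|)$ in total. Here I assume the model-inferred field $p$ is already present in the source metadata, so that no LLM call is charged to the algorithmic overhead.

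\textbf{Diversity selection.} For Algorithm~\ref{alg:diversity} I use Proposition~\ref{prop:diversity-complex}. Per keyword, ranking, \textsc{NextBest} and \textsc{NextDifferentDomain} each scan $R_k$ at most once. Every membership test against $U$ and every lookup in $D$ is $O(1)$ amortized by hashing. This gives $O(R)$ per keyword and $O(|K|\cdot R)$ overall.

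\textbf{Main obstacle.} The delicate point is making the cost accounting of \textsc{RankByRelevance} and \textsc{Search} honest. Both involve external calls: an LLM relevance model and network search. I will state explicitly that the theorem counts only the \emph{algorithmic overhead introduced by our three algorithms}. Under that reading, \textsc{Search} contributes only the $O(R)$ cost of reading $R_k$. \textsc{RankByRelevance} contributes an $O(R)$ argmax over precomputed scores, with the oracle cost excluded.

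\textbf{Summation.} Summing the three pieces gives $O(n\log n) + O(|K|) + O(|K|R)$. Under the hypotheses $n \le |K|$ and $\log n \le R$, we have $n\log n \le |K|\cdot R$. Also $|K| \le |K|\cdot R$ whenever $R \ge 1$; if $R = 0$, every selection step is vacuous and the claim is trivial. Hence the total is $O(|K|\cdot R)$.

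\textbf{LOI remark.} The final clause follows from Theorem~\ref{thm:loi-convergence}. The convergence expression $1-(1-\alpha\beta)^k$ is a closed form that is never evaluated at runtime. Its value is determined by the single upstream brief generation, whose cost lies outside the algorithmic overhead. The expression itself takes $O(1)$ arithmetic operations, or $O(\log k)$ if exponentiation by squaring is counted.
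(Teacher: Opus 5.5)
Your proposal is correct and follows essentially the same route as the paper's proof. You invoke Corollary~\ref{cor:alloc-complexity} for the allocation, charge the cascade a constant per selected source (at most $|K|$ of them), charge Algorithm~\ref{alg:diversity} $O(R)$ per keyword via $O(1)$ hash lookups, and absorb $O(n\log n)$ and $O(|K|)$ into $O(|K|\cdot R)$ using $n\le |K|$, $\log n\le R$ and $R\ge 1$. You are more explicit than the paper about excluding oracle and search costs, and you also handle the LOI clause, which the paper's proof does not address at all.

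The only slip is your $R=0$ aside. There the loop still performs $|K|$ constant-time iterations, so the bound $O(|K|\cdot R)$ fails rather than holding trivially. The right move is simply to assume $R\ge 1$, as the paper implicitly does.
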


\begin{proof}
Algorithm~\ref{alg:allocate} runs in $O(n \log n)$ (Corollary~\ref{cor:alloc-complexity}). The cascade (Definition~\ref{def:cascade}) evaluates a constant-depth priority chain per source, contributing $O(|K|)$ across all keywords. Algorithm~\ref{alg:diversity} iterates over $|K|$ keywords with at most $R$ candidates each, and the hash-map operations (lines 5, 7, 12) are $O(1)$ amortized. The dominant term is $O(|K| \cdot R)$. Since $n \leq 4$ and $R \leq 10$ in typical deployment, the wall-clock overhead is sub-millisecond.
\end{proof}

\section{System Architecture}\label{sec:architecture}

\subsection{Pipeline Stages}

\begin{table}[H]
\centering
\caption{Pipeline stages with model assignments.}\label{tab:stages}
\small
\begin{tabular}{@{}clll@{}}
\toprule
\# & Stage & Model & Algorithm \\
\midrule
1 & Language Detection & Gemini 2.0 Flash~\citep{comanici2025gemini} & --- \\
2 & Research Brief & Phi-4 (14B)~\citep{abdin2024phi4} & Def.~\ref{def:brief} \\
3 & Keyword Generation & Qwen3-235B-A22B~\citep{yang2025qwen3} & Alg.~\ref{alg:allocate} \\
4 & Recency Assessment & Gemini 2.0 Flash~\citep{comanici2025gemini} & --- \\
5 & Source Selection & Gemini 2.5 Flash Lite~\citep{comanici2025gemini} & Alg.~\ref{alg:diversity} \\
6 & Content Extraction & Headless rendering + DOM parser & --- \\
7 & Summarization & DeepSeek-V3~\citep{deepseek2024v3} & --- \\
\bottomrule
\end{tabular}
\end{table}

The pipeline composes seven stages (Table~\ref{tab:stages}). Model selection follows a heterogeneous assignment strategy: the planning stage uses Phi-4~\citep{abdin2024phi4}, a 14B-parameter model whose data-quality-focused training yields strong reasoning for strategic planning. Keyword generation employs Qwen3-235B-A22B~\citep{yang2025qwen3}, a mixture-of-experts model with 22B activated parameters per token, exploiting its 119-language multilingual coverage for locale-specific keyword production. Source selection and lightweight classification tasks use models from the Gemini family~\citep{comanici2025gemini}, selected for low latency and cost efficiency. Summarization uses DeepSeek-V3~\citep{deepseek2024v3}, a 671B MoE model with 37B activated parameters, chosen for its strong generative quality. Stages 1--2 execute sequentially, as their outputs parameterize all downstream computation. Stages 3--5 exploit intra-stage parallelism.

\subsection{Concurrency Model}

Keyword generation parallelizes across $n$ locales. Source selection parallelizes search API calls across all keywords with per-task timeout ($\delta_{\text{search}} = 10$s) to prevent straggler blocking. Stage~6 (content extraction) employs a headless browser rendering pipeline coupled with DOM-level content isolation to retrieve the main textual body from each selected source URL, stripping navigation chrome, advertisements, and boilerplate markup. This extraction operates under a bounded-concurrency semaphore ($\sigma = 3$) with per-source budgets:

\begin{definition}[Source Processing Budget]\label{def:budget}
Each source $s_i$ is allocated a total budget $\Delta = 45$s. Individual sub-steps (fetch, summarize) are capped at $\delta = 15$s. If a source exceeds its budget, the pipeline attempts up to $A = 4$ alternative URLs from the same search result set.
\end{definition}

\subsection{LOI Data Flow}

The research brief $\cB$ propagates through the pipeline as follows. Let $\pi_i$ denote the projection for stage~$i$:

\begin{align}
  \pi_{\text{keyword}} &= \{u, \sigma, \kappa, \lambda, M\} \\
  \pi_{\text{selection}} &= \{u, \sigma\} \\
  \pi_{\text{summary}} &= \{u, \sigma, \tau\}
\end{align}

Each stage receives \emph{only} its relevant fields, preventing cross-stage coupling. The brief's source strategy $\sigma$ carries LOI objectives to both selection and summarization stages, while keyword guidance $\kappa$ is consumed only by the keyword generation stage.

\section{Evaluation}\label{sec:eval}

\subsection{Experimental Setup}

We evaluate on 120 queries: geopolitical (30), academic (30), local/business (30), general (30). Queries span Turkish, English, German, and Arabic. Three configurations are compared:

\begin{enumerate}[nosep,leftmargin=*,label=(\alph*)]
  \item \textbf{Baseline}: Single-locale, no diversity constraint, no brief.
  \item \textbf{Ours--NoBrief}: Multi-locale with Alg.~\ref{alg:allocate} + Alg.~\ref{alg:diversity}, but no research brief (no LOI).
  \item \textbf{Ours--Full}: Complete pipeline with brief propagation and LOI.
\end{enumerate}

\subsection{Metrics}

\begin{align}
  \FPR(S) &= \frac{\card\{s \in S : \text{is\_first\_party}(s)\}}{|S|} \\[4pt]
  \DDR(S) &= \frac{\card\{(i,j) : i < j,\, \dom(u_i) = \dom(u_j)\}}{\binom{|S|}{2}} \\[4pt]
  \LC(S) &= \card\{\gamma(s) : s \in S,\, \gamma(s) \neq \varepsilon\}
\end{align}

Human relevance judgments on a 1--5 scale by two annotators (Cohen's $\kappa = 0.74$).

\subsection{Results}

\begin{table}[H]
\centering
\caption{Main results across 120 queries.}\label{tab:results}
\small
\begin{tabular}{@{}lcccc@{}}
\toprule
Configuration & FPR $\uparrow$ & DDR $\downarrow$ & LC $\uparrow$ & Rel.\ $\uparrow$ \\
\midrule
Baseline & 0.31 & 0.42 & 1.2 & 3.8 \\
Ours--NoBrief & 0.44 & 0.05 & 2.7 & 3.9 \\
\textbf{Ours--Full} & \textbf{0.51} & \textbf{0.05} & \textbf{3.4} & \textbf{4.1} \\
\bottomrule
\end{tabular}
\end{table}

Table~\ref{tab:results} shows that the full pipeline achieves 62\% FPR improvement over baseline (0.51 vs.\ 0.31) and 89\% DDR reduction (0.05 vs.\ 0.42). The domain diversity constraint (present in both Ours--NoBrief and Ours--Full) is responsible for the DDR improvement. LOI via brief propagation contributes an additional 16\% FPR gain (0.51 vs.\ 0.44) and 26\% LC improvement (3.4 vs.\ 2.7), demonstrating that environment shaping induces meaningful behavioral change in downstream models.

\subsection{Ablation: LOI vs.\ Explicit Constraints}

We compare LOI against an explicit-constraint variant that injects ``Do not select results from tripadvisor.com, yelp.com, booking.com'' into the relevance prompt.

\begin{table}[H]
\centering
\caption{LOI vs.\ explicit constraint ablation.}\label{tab:ablation}
\small
\begin{tabular}{@{}lcccc@{}}
\toprule
Method & FPR & DDR & Unseen Agg.\ $\downarrow$ & Relevance \\
\midrule
Explicit Constraint & 0.48 & 0.05 & 0.31 & 3.7 \\
\textbf{LOI (ours)} & \textbf{0.51} & \textbf{0.05} & \textbf{0.12} & \textbf{4.1} \\
\bottomrule
\end{tabular}
\end{table}

Table~\ref{tab:ablation} reveals two critical findings. First, LOI achieves higher FPR because the concept generalizes: the model deprioritizes \emph{all} aggregators, not just the three named ones. The ``Unseen Agg.'' column measures the fraction of results from aggregator domains \emph{not} in the explicit blacklist---LOI reduces this by 61\%. Second, explicit constraints degrade relevance (3.7 vs.\ 4.1), as the model occasionally over-applies the constraint, skipping genuinely relevant aggregator results in cases where no first-party alternative exists.

\section{Discussion}\label{sec:discussion}

\subsection{Generality of the Algorithms}

The three core algorithms are not specific to web search. The weighted locale allocation (Algorithm~\ref{alg:allocate}) solves any constrained integer partition where items must be distributed across categories with minimum-representation, budget, and proportionality constraints---applicable to resource allocation, task scheduling, and proportional representation systems. The $\kappa$-domain diversity constraint (Algorithm~\ref{alg:diversity}) generalizes to any ranked-list merging problem where a grouping function partitions candidates into equivalence classes and at most $\kappa$ representatives per class are permitted. The cascaded inference (Definition~\ref{def:cascade}) provides a template for any multi-signal attribute resolution where signals have known reliability ordering.

\subsection{LOI as Mechanism Design for LLMs}

LOI inherits from mechanism design~\citep{hurwicz2006mechanism} the principle that system rules should align agent incentives with designer objectives. In classical mechanism design, the designer cannot control agents' actions but can design the game structure. Analogously, we cannot control the LLM's token generation, but we can design the prompt environment. The key insight is that LLMs are \emph{cooperative agents}---they attempt to follow instructions---so LOI exploits this cooperativeness by providing conceptual frameworks rather than explicit directives, enabling generalization to novel inputs.

The connection to nudge theory~\citep{thaler2008nudge} is equally illuminating. Thaler and Sunstein's ``libertarian paternalism'' preserves freedom of choice while designing the choice architecture to favor beneficial outcomes. Our autonomy preservation property (Definition~\ref{def:loi-props}(i)) is the formal analog: the model's output space remains unrestricted, but the probability landscape tilts toward desired outcomes.

\subsection{Limitations}

Our approach inherits limitations of the underlying search API: if the search index lacks first-party sources, diversification cannot surface them. The research brief is generated by Phi-4~\citep{abdin2024phi4}, a 14B-parameter model, which may produce shallow strategies for highly specialized domains. The cascaded inference assumes TLD reliability, which fails for CDN-hosted content. LOI's effectiveness depends on the downstream model's semantic comprehension of the embedded concepts; smaller models may fail to interpret nuanced strategy text.

\section{Conclusion}\label{sec:conclusion}

We have presented four algorithms for the constrained multi-source selection problem in retrieval pipelines. The weighted locale allocation solves a constrained integer partition in $O(n \log n)$ with provable simultaneous satisfaction of minimum-representation, budget-exhaustion, and proportionality constraints. The cascaded country-code inference provides deterministic attribute resolution with formal graceful degradation. The $\kappa$-domain diversity constraint eliminates aggregator monopolization through $O(1)$ amortized hash-map lookups. The Latent Objective Induction operator, formalized with four axiomatic properties, provides a principled alternative to explicit constraint injection for steering downstream computation, with convergence guarantees under mild assumptions. Empirical evaluation confirms that these algorithms compose effectively: the combined system achieves 62\% improvement in first-party source ratio and 89\% reduction in domain duplication. The algorithms are general---the allocation and diversity constraint apply to any ranked-list merging problem with coverage requirements, and LOI applies to any multi-stage pipeline with LLM-mediated decisions.


\end{document}